\setlist{noitemsep,leftmargin=\parindent,topsep=2pt}
\setlist{noitemsep,topsep=2pt}
\newtheorem{theorem}{Theorem}[section]
\newtheorem{assumption}[theorem]{Assumption}
\newtheorem{proposition}[theorem]{Proposition}
\newtheorem{definition}[theorem]{Definition}
\def\min{\qopname\relax n{min}}
\def\max2{\qopname\relax n{max2}}
\def\max{\qopname\relax n{max}}
\def\argmax{\qopname\relax n{argmax}}
\def\Ex{\qopname\relax n{\mathbb{E}}}
\newcommand{\expect}[2]{\Ex_{#1}\left[#2\right]}
\def\1{\mathbb{I}}
\def\B{\mathcal{B}}
\def\F{\mathcal{F}}
\def\M{\mathcal{M}}
\def\S{\mathcal{S}}
\def\Z{\mathcal{Z}}
\def\RR{\mathbb{R}}
\def\part{P}
\def\prom{\textit{Prom} }
\def\Prom{\textsc{Prom} }
\def\llm{\textsc{llm}}
\def\gen{\text{Gen}}
\def\ROUGE{\textsc{ROUGE}}
\newenvironment{lp*}{\begin{equation*}  \begin{array}{lll}}{\end{array}\end{equation*}}
\newcommand{\pctr}{\text{pctr}}
\newcommand{\posnorm}{\text{pos}\_\text{norm}}
\newcommand{\ppctr}{\text{pctr}^{\text{pos}}}
\newcommand{\fpctr}{\text{pctr}^{\text{final}}}
\newcommand{\vbid}{\vec{b}}
\newcommand{\bid}{\text{b}}
\newcommand{\vpctr}{\overrightarrow{\pctr}}
\newcommand{\vecpm}{\overrightarrow{\text{ecpm}}}
\newcommand{\ecpm}{\text{ecpm}}
\newcommand{\vposnorm}{\overrightarrow{\posnorm}}
\renewcommand{\cite}{\citep}
\newcommand{\ctrllm}{\text{ctr}^{\llm}}
\begin{document}

\title{Auctions with LLM Summaries}

\author{Kumar Avinava Dubey\footnote{Author list follows alphabetical order.}}
\author{Zhe Feng$^*$}
\author{Rahul Kidambi$^*$}
\author{Aranyak Mehta$^*$}
\author{Di Wang$^*$}

\affil{Google Research, Mountain View \authorcr \texttt{avinavadubey,zhef,rahulkidambi,aranyak,wadi@google.com}}

\date{April 11, 2024}

\maketitle
\begin{abstract}
We study an auction setting in which bidders bid for placement of their content within a summary generated by a large language model (LLM), e.g., an ad auction in which the display is a summary paragraph of multiple ads. This generalizes the classic ad settings such as position auctions to an LLM generated setting, which allows us to handle general display formats. We propose a novel factorized framework in which an auction module and an LLM module work together via a prediction model to provide welfare maximizing summary outputs in an incentive compatible manner. We provide a theoretical analysis of this framework and synthetic experiments to demonstrate the feasibility and validity of the system together with welfare comparisons.
\end{abstract}

\section{Introduction}\label{sec:intro}
The advent of large language model (LLM) technology has the potential to change the user experience of online services such as internet search, online recommendations~\cite{geng2022recommendation}, or shopping~\cite{fan2023recommender}. For example, search platforms and apps, e.g., Microsoft Bing~\cite{AIB} and Google Search~\cite{SGE}, %
have already experimented with generative AI tools to provide augmented search summarization to facilitate users' search experience. Such summarization (e.g., based on retrieval augmented generation RAG~\cite{RAG2020}) can sometimes provide an efficient way for users to gain useful information in a more condensed space.

\begin{figure}[t]
\centering
\includegraphics[width=0.49\textwidth]{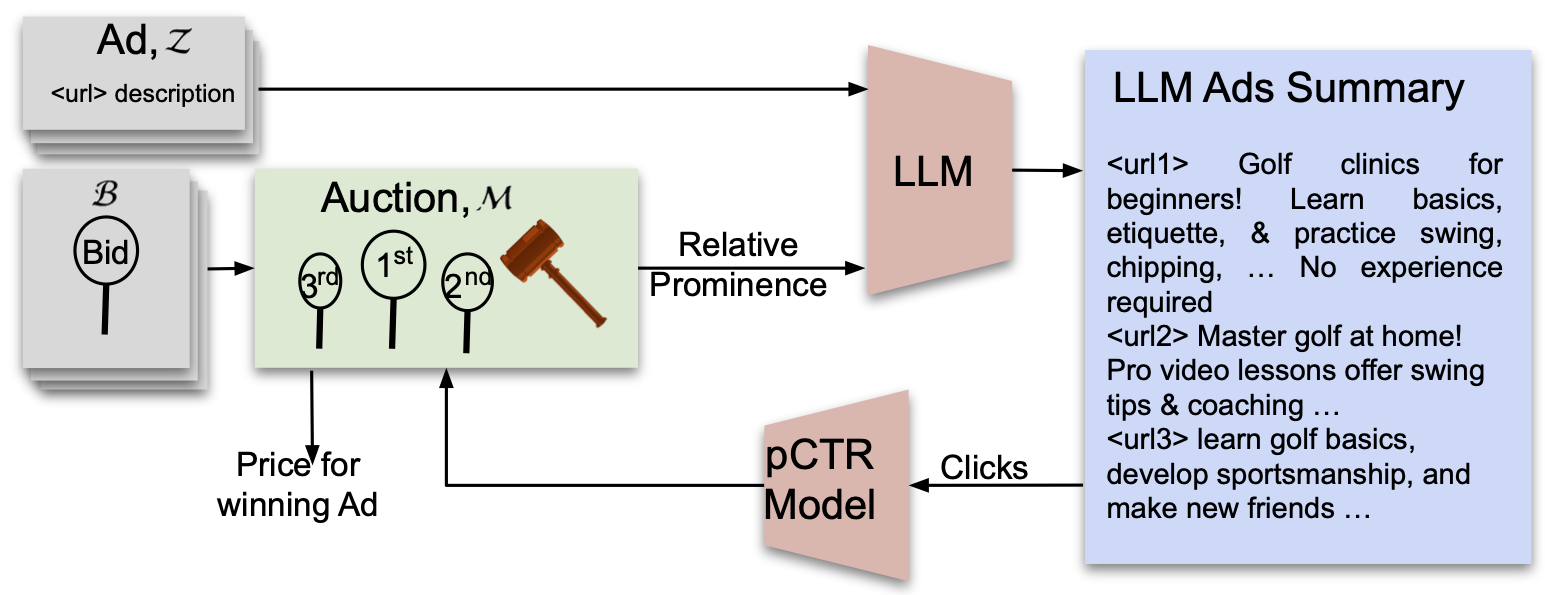}
\caption{\small Factorized model for Auctions with LLM Summaries.}
\label{fig:factor_model}\vspace{-3mm}
\end{figure}

For queries of a commercial nature, search platforms respond with relevant online advertising. Online search advertising has provided a means not only to connect buyers and sellers, but also to support free internet services to users. Given the exciting potential of LLMs to summarize multiple sources of content and provide a succinct and informative output, it is natural to ask how LLM technology can help improve online advertising.

In the ever-evolving landscape of online advertising, auction design has been a critical component towards improving the effectiveness and efficiency of ad delivery. A well-designed auction mechanism not only provides revenue for the platform but also ensures relevancy and value for users and advertisers alike. It efficiently allocates ad impressions to the right audiences, creating high value to advertisers and fostering a positive user experience. While auctions have proved to be a flexible method in various fixed settings such as a single slot~\cite{vickrey1961counterspeculation}, position auctions~\cite{Varian06,EOS07} or list of rich ads~\cite{aggarwal2022simple}, auctions have not been studied in general summarization settings. 

Here, we consider a general setting in which an LLM takes as input a set of ads (and ad assets such as creatives and web pages) and returns a summarized paragraph which can be more helpful to the user, e.g., to compare and contrast product features, use cases, or price. This immediately raises a few challenges compared to the currently used static settings. For example, in a position auction the auction directly determines the position of each ad, and furthermore there is a fixed text creative for each ad. Since the position based predicted click-through rates (CTRs) are known to the auction at decision time, the auction knows an estimate of the expected clicks, welfare, revenue for each possible allocation (permutation of ads), and hence can make efficient decisions on allocation and pricing. In the LLM-based setting that we introduce, the final summarized ad text and relative user attention to the individual ads are all determined at run time by the LLM. Thus the LLM black-box lies between the auction's decision and what the user sees, so the auction can not fully control the latter. How can an auction then choose allocations and prices to maximize welfare or revenue, while also providing good auction incentives in such a setting? That is the question we study in this paper.

Our goal is to design an auction framework which can accommodate content summarization via LLMs in quite general summarization settings. We want a system in which the auction and the LLM work hand-in-hand to provide good properties: Firstly, the LLM summaries should be succinct and accurate, and have good entailment to the ad assets. Secondly, the system should provide good incentive properties to the agents (advertisers). For example, a higher bid should result in higher user-attention and higher click-through rate for an ad. Formally, we want the entire mechanism to be Incentive Compatible (IC). Thirdly, we want the mechanism to provide efficient outcomes with high social welfare, leading to high value to advertisers and users.

While we use ad auctions as the guiding application, we note that our formulation and results apply more generally to any summarization task where the individual content items are owned by agents, and where (a) each agent derives a benefit to have more prominence in the summary, (b) the platform's goal is to maximize some welfare objective which incorporates the quality of the content items and the agent's values expressed via bids, and (c) the quality or relevance of the content items for user queries is learned through a prediction model. For example, one may envision a {\bf recommendation setting} in which individual content providers bid to be shown with higher prominence. In a different setting, the content items could be represented by platform-internal systems representing different objective functions. In such a setting, the pricing component of the auction is not needed. An example of such a setting is aggregation of user reviews of a product or service in accordance with some notion of user reputation or quality score.

\subsection{Our Contribution} 
\label{sec:intro-contributions}
\paragraph{Factorized Model for Auctions with LLM Summaries.} 
We introduce a general problem of running auctions to generate summaries of $k$ ads, with an LLM in the loop. By a summary of ads we mean any collective representation of ads, ranging from an ordered list of fixed ad assets (as in a position auction) to a combined summary of all ads with potential images or videos.

We then develop a framework for an auction to work for such general summary requirements. Our framework contains two main modules, an auction module and an LLM module. The auction module takes as input the bids, and ad qualities, as well as click predictions from a predicted click-through rate (pCTR) module, and outputs a \emph{prominence allocation} and prices. The former is a key abstraction that we introduce to be the interface between the auction and LLM module. The LLM module takes as input the prominence allocation from the auction module, and generates a summary (in the required format). We can take the prominences to be real numbers, although the general definition allows for abstract prominence spaces.

We then describe three sufficient properties for such a factorized model to work effectively: 
\begin{enumerate}[leftmargin=*]
    \item[(a)] The auction's allocation function should be monotonic in bids, i.e., a higher bid results in a higher prominence number.

\item[(b)] The LLM should have a \emph{faithfulness} property which requires the LLM output follows the auction's instructions via the prominences. One can think of the LLM as computing a function which converts the prominences to real user attention. Faithfulness requires that this function is monotonic in each ad's prominence, i.e., as an ad's prominence allocation increases it receives higher user attention. It also requires the user-consideration achieved by the different ads to be proportional to their prominences.%

\item[(c)] The CTR prediction module works with the abstract prominences as features and learns the above function implemented by the LLM module. Formally, we require that the CTR module generate an unbiased estimation of the click-through rate function given the prominence, under expectation of the LLM generation. 
\end{enumerate}

We show that these three properties allow us to close the loop between the three modules, and the auction can be oblivious to the function implemented by the LLM.

Utilizing advanced prompt design techniques \citep{zhou2022least,madaan2022text}, we strategically craft prompts that align the the LLM with the auction's instructions. We provide synthetic experiments to verify the feasibility and validity of our factorized model approach for a specific simple choice of summarization, which we call \emph{dynamic word length}. Experiments also demonstrate the efficiency of our mechanism in that it can produce more efficient outcomes compared to a static position auction or even a fixed-length auction that we define later.

Also for the specific example of dynamic word length summarization, we show that for a simple yet very realistic family of pCTR models, we can fully characterize the (exact) welfare-maximizing auction in that setting. This is in contrast to the case of rich format position auction where finding exact welfare-maximizing auction is typically challenging due to the combinatorial nature of the feasible space. Intuitively, LLM empowers us to expand the feasible space to a much larger continuous space. This both increases the optimal welfare and makes the optimization task easier, which at a high level is analogous to relaxing integer program to linear program.

\subsection{Related Work}
There has been rich literature on position auctions in online advertising, e.g.,~\cite{Varian06, EOS07}, including rich-ad formats~\cite{aggarwal2022simple,cavallo-richads}, and our work can be regarded as an extension of current position auctions to incorporate new formats based on generative AI. 

Mechanism design with Large Language Models (LLM) is a very new but rising field. A 
work by \citet{duetting2023mechanism} propose a token-based auction framework for LLM agents, where the bidders use LLM to generate ads and the auction applies distributional aggregation across different LLMs from the bidders to generate the ads paragraph in a token-by-token. In~\cite{duetting2023mechanism}, the allocation and payment rules are both operated in a token-by-token manner, where the tokens are randomly generated by an aggregation of each agent's LLM. Whereas, in our paper, we propose a factorized framework that contains an auction module and an LLM module, in which the allocation and payment are still decided by auction module and the LLM module will only be used to generate summaries following the guidance from the allocated \textbf{prominence} by the auction module. In another work, \citet{feizi2023online} discuss general challenges and opportunities of online advertising in the age of generative AI.

We will be using prompting to ensure LLMs stick to auction's instructions and to generate the synthetic dataset containing queries and ads. Prompting is a fast and efficient method for downstream application of LLMs and has been extensively studied \citep{nye2021show, wei2022chain, zhou2022least, wang2022self, wang2023plan, madaan2023self}. We use Chain-of-Thought (CoT) \citep{zhou2022least}, and provide few shot reasoning as intermediate steps to improve performance to downstream goal.  Multiple variants of CoT including Zero-shot CoT \citep{kojima2022large}, self consistency \citep{wang2022self}, Tree of Thoughts \citep{yao2023tree}, Graph of Thoughts \citep{besta2023graph} further extend the reasoning capability of CoT methods. Among them, we use iterative reasoning to ensure LLMs stick to auction instruction.

\section{Model and Preliminaries}\label{sec:model}

As mentioned in Section~\ref{sec:intro-contributions}, 
we propose a factorized model consisting of an auction module, an LLM module, and a click-through rate prediction module. A high-level schematic of this factorized model is provided in Figure~\ref{fig:factor_model}. We next describe the input-output characteristics of the three modules in more detail.

\subsection{Auction Module}
We consider $n$ different bidders competing for the ad summaries shown by LLM. Each bidder $i\in [n]$ has a \emph{private} value $v_i \in \mathbb{R}_{\geq 0}$ when its ad gets clicked. Follow the standard Bayesian mechanism design literature, we assume $v_i \sim \F_i$ and that the valuation distribution $\F_i$ is known to the other bidders and to the auctioneer. Denote $\F = \times_i \F_i$ be the joint distribution of valuation profile $v = (v_1, v_2,\cdots, v_n)$.  Let $b_i \in \B_i \subseteq \mathbb{R}_{\geq 0}$ denote bidder $i$'s bid which can be different from the true value $v_i$. Without loss of generality, we assume value and bids are from the same space, i.e., $v_i\in \B_i$. 

Denote $b = (b_1, \cdots, b_n) \in \B = \times_i \B_i$ as the bid profile of $n$ bidders. Following the standard auction design literature, we define $b_{-i}$ as the bid profile of the other bidders except for bidder $i$. For each bidder $i$, let $z_i \in \Z_i$, be a feature which contains assets from the bidder, such as the ad creative and landing page. Let $z = (z_1, \cdots, z_n)\in \Z = \times_i \Z_i$ be the feature profile of all the bidders' ads, which will be used as context in the LLM. 

{\bf Relative Prominence:} As usual, the auction mechanism $\M = (x, p)$ contains an allocation rule $x$ and payment rule $p$, both being functions of the bids and pCTRs. Different from standard auction design, the allocation rule $x$ in our prominence-based auction module does not directly specify the allocation but outputs the \emph{relative prominence} of each bidder's ad. This represents the relative importance the ad is supposed to get in the LLM generated summary, and will be taken as input to the LLM module.  

Specifically, we define $\Prom$ as an abstract space of prominences which is the interface language between the auction and the LLM modules -- it is the range space of the allocation function $x$ and the input space for the LLM. The auction also needs to know an estimate of the click-through rate that each ad would get if it output a particular $\prom \in \Prom$. This is enabled by the pCTR module (Sec.~\ref{sec:model-pctr-module}), which gives the auction a map $\pctr: \Prom \times \Z \rightarrow [0,1]^n$.
The {\bf allocation function} of the auction is now a function $x: \B \times \pctr \rightarrow \Prom$ specifying the allocation of prominence. 
We assume $\Prom$ has a well-defined tuple of order relations $\succeq := (\succeq^1, \cdots, \succeq^n)$, in which $\succeq^i, \forall i \in [n]$ specifies the preference of each bidder $i$ over all possible relative prominences. %
The {\bf payment rule} $p = (p_1, \cdots, p_n)$ specifies the (expected) payment  for all bidders given submitted bids, where $p_i: \B \times \pctr \rightarrow \RR_{\geq 0}$ maps the bid profile to a non-negative payment.

Relative prominence is a general and \emph{abstract} concept, as long as it has a well-defined order relation $\succeq$ and the LLM can easily follow its guidelines. For example, the prominence can represent for each ad, a tuple of the space and the attractiveness of its creative. In such a setting, the LLM could first follow the requirements of the space and then generate the appropriately attractive creatives using different multimodal formats for each ad. To simplify the presentation and to formalize a concrete setting, we focus on a simple structure of relative prominence throughout the rest of the paper:%
\begin{definition}The Relative Prominence $\Prom$ is the set of points $\prom = (\prom_1, \cdots, \prom_n)$, with $\sum_{i\in [n]} \prom_i  \leq 1$. Here, $\prom_i$ represents bidder $i$'s allocated prominence.%
\end{definition}

\subsection{LLM Module}\label{sec:model-llm}
The LLM module can be abstracted as a function to map the allocated prominence of all ads to a summary. 
Formally, the LLM is a function $\gen: (\Prom, \Z) \rightarrow \S$, where $\S$ represents the set of possible combined summaries of the $n$ bidders.

For the factorized system described in Figure~\ref{fig:factor_model} to work efficiently, we would like the LLM module to satisfy the following property.

\begin{definition}[Faithfulness]
Given a set of input ad features $z\in \Z$ and a set of input prominences $\prom \in \Prom$, the LLM output summary $s = \gen(z, \prom)$ should be such that when a user reads $s$, then ad $i$ gets an amount of consideration proportional to $\prom_i$.
\end{definition}

This definition is informal due to the absence of a mathematical formulation of user consideration. However, the Faithfulness property can be evaluated through testing on a panel of users or paid evaluators. 
Informally, the property asks the LLM to implement relative prominence instructions given by the auction.
Note that the faithfulness property implies \emph{(strict) monotonicity}: as an ad's prominence increases, the user attention it gets also increases. We will show in Sec.~\ref{sec:properties} that monotonicity is a sufficient condition to make the system incentive compatible.

\subsection{pCTR Module}
\label{sec:model-pctr-module}
As in classic ad auctions, we need a predicted click-through rate (pCTR) module to allow the auction to make efficient allocation decisions\footnote{Click-through prediction is required to enable per-click payment schemes. We note that we can similarly incorporate other predictions such as conversion rates in our framework.}. 
The CTR of each ad is not just a function of its intrinsic quality but is also modulated by its representation in the ad summary generated by the LLM. This creates a difficulty in designing the auction's allocation function since the generated summary is not known at auction time and the LLM is not a deterministic function. We get around this in our factorized model by requiring that the pCTR model only predict the average CTR for each ad given the relative prominence $\prom$. Formally, the pCTR model $\pctr: \Prom \times \Z \rightarrow [0, 1]^n$ maps the ads feature profile $z\in \Z$ and the allocated prominence $\prom \in \Prom$ to an \emph{average} click-through rate that takes the expectation over the randomness of the LLM generation. Let $\pctr_i: \Prom \times \Z \rightarrow [0, 1]$ be the average pCTR model of bidder $i$.

Given the ad summaries generated by LLM, let $\ctrllm: \S \rightarrow [0, 1]^n$ model the expected CTR for $n$ ads. Similarly, denote $\ctrllm_i: \S \rightarrow [0, 1]$ as the expected CTR function of bidder $i$. %
\begin{assumption}[Unbiased Estimation of pCTR Model]\label{ass:unbiased-pctr}
$\forall i\in [n], \prom = (\prom_1, \cdots, \prom_n) \in \Prom, z \in \Z$, we have,
\begin{eqnarray*}
\pctr_i(\prom, z) = \expect{s \in \S: s\sim \gen(\prom, z)}{\ctrllm_i(s)}
\end{eqnarray*}
\end{assumption}
Note that this unbiased estimation of pCTR model in auction stage is important for our factorized model to work as intended, given that the auction relies on this prediction model to make its allocation and pricing decisions. %

Given the above property, we are now ready to define the incentive compatibility property of the factorized model end-to-end, i.e., truthfully reporting values are the optimal bidding strategies for all bidders:
\begin{definition}[Incentive Compatibility of the Factorized Model]
\label{def:e2e-ic}
A factorized model with a prominence-based auction mechanism $\M = (x, p)$ and a LLM $\gen$ is incentive compatible if and only if, $\forall z\in \Z, \forall b\in \B, \forall i\in [n]$ and $b'_i\neq b_i$, 
\begin{align*}
\expect{s\sim \gen(x(b), z)}{\ctrllm_i(s) \cdot b_i - p_i(b)} \geq \expect{s'\sim \gen(x((b'_i, b_{-i})), z)}{\ctrllm_i(s')\cdot b_i - p_i((b'_i, b_{-i}))},
\end{align*}
which, using Assumption~\ref{ass:unbiased-pctr}, is equivalent to,
\begin{align*}
\pctr_i(x(b))\cdot b_i - p_i(b) \geq \pctr_i(x((b'_i, b_{-i})))\cdot  b_i - p_i((b'_i, b_{-i})).
\end{align*}
\end{definition}
Note that for simplicity on notation, we drop the dependence of the allocation and payment functions $x$ and $p$ on the $\pctr$ map.

\subsection{The model is a generalization of Position Auctions}
The model described in this section strictly generalizes the current industry standard of Position Auctions~\cite{Varian06,EOS07}. This can be seen by appropriately choosing the different model components and variables as defined above to fit the setting of a simple position auction (without any LLM in the loop). This is presented in Table~\ref{table:pos-auc} below. Besides showing that the Position Auction is a (very) special case of our general model, the table may also help provide an intuitive understanding of our abstract model.

\begin{table}[h]
\centering
\begin{small}
\begin{tabular}{l  l} 
\toprule
 General Model  (this paper) & Instantiation for classic Position Auctions\\
 [0.5ex] 
 \midrule
Summary ``LLM" & An ordered list $s$ of fixed ad creatives. Sort $\{\prom_i\}_{i\in [n]}$ \\
& and display the ads in that order. \\ 
$\ctrllm_i(s)$ &  Average clicks for ad $i$ in its position in the list $s$.\\
$\pctr_i(\prom, z)$ & Predicted clicks for ad $i$ in the permutation $\prom$.\\ [1ex] 
 \bottomrule
\end{tabular}
\end{small}
\caption{Model instantiation for classic Position Auctions.}
\label{table:pos-auc}
\end{table}
In Section~\ref{sec:case-study}, we consider a more general instantiation of the general model, which we call Dynamic Word Length Summary (DWLS). In that case each of the above components are more general that the position auction instantiation.

\section{Prominence-based Auction Design}\label{sec:mechanism}
\label{sec:properties}

In this section, we consider the design problems in the factorized prominence-based auctions. First, we provide conditions under which we get incentive compatibility.
Second, we show the factorized model and above sufficient condition is indeed without loss generality, by proving a ``revelation principle" type result. Finally, we discuss the 
welfare-maximizing auction design in general setting. 

\subsection{Incentive Compatibility}

\begin{definition}[monotone Allocation]\label{def:monotone-alloc}
An prominence-based auction mechanism $\M = (x, p)$ has a monotone allocation function if $\forall i \in [n], b_{-i}$ and $b_i \geq b'_i$, $x((b_i, b_{-i})) \succeq^i x((b'_i, b_{-i}))$.
\end{definition}

\begin{definition}[monotonic LLM]\label{def:monotone-llm}
An LLM $\gen$ is said to be monotonic if $\forall i\in [n], z\in \Z, \prom, \prom'$, and $\prom \succeq^i \prom'$,
\begin{eqnarray}\label{eq:monotone-llm}
\expect{s \in \S: s\sim \gen(\prom, z)}{\ctrllm_i(s)} \geq \expect{s \in \S: s\sim \gen(\prom', z)}{\ctrllm_i(s)}
\end{eqnarray}
\end{definition}

We next characterize the incentive compatibility of factorized model.
\begin{proposition}[Incentive Compatibility of Prominence-based Auctions]\label{prop:ic-prom-auction}
Given a monotonic LLM  (Def.~\ref{def:monotone-llm}) and an unbiased $\pctr$ module (Def.~\ref{ass:unbiased-pctr}), a prominence-based auction $\M = (x, p)$ is incentive compatible (Def~\ref{def:e2e-ic}) if and only if
\begin{itemize}
\item $\M$ has a monotonic allocation rule (Def.~\ref{def:monotone-alloc}).
\item %
The payment rule $p$ follows Myerson's Lemma~\citep{myerson1981optimal}, i.e., $\forall i\in [n], z\in \Z^n, b\in \B$,
\begin{eqnarray}\label{eq:payment}
p_i(b) &=  b_i \cdot \pctr_i(x(b), z)  - \int_{0}^b  \pctr_i(x(y, b_{-i}), z)dy,
\end{eqnarray}
\end{itemize}
\end{proposition}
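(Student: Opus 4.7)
The plan is to reduce this to Myerson's Lemma applied to a derived single-parameter mechanism. Specifically, define the effective allocation $A_i(b) := \pctr_i(x(b), z)$. By Assumption~\ref{ass:unbiased-pctr}, the IC inequality in Definition~\ref{def:e2e-ic} is exactly
\begin{equation*}
A_i(b) \cdot b_i - p_i(b) \;\geq\; A_i(b'_i, b_{-i}) \cdot b_i - p_i(b'_i, b_{-i}),
\end{equation*}
which is the standard single-parameter IC inequality with effective allocation $A_i$ and payment $p_i$. So everything will follow from classical Myerson arguments once this reduction is in place.

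For the ``if'' direction, I would first show $A_i(\cdot, b_{-i})$ is non-decreasing in $b_i$. Take $b_i \geq b'_i$: Definition~\ref{def:monotone-alloc} gives $x(b_i, b_{-i}) \succeq^i x(b'_i, b_{-i})$; Definition~\ref{def:monotone-llm} then gives $\ex_{s \sim \gen(x(b_i, b_{-i}), z)}[\ctrllm_i(s)] \geq \ex_{s \sim \gen(x(b'_i, b_{-i}), z)}[\ctrllm_i(s)]$; and Assumption~\ref{ass:unbiased-pctr} identifies both sides with $A_i$-values, yielding $A_i(b_i, b_{-i}) \geq A_i(b'_i, b_{-i})$. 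Combined with payment rule~\eqref{eq:payment} (which is the Myerson payment in terms of $A_i$), Myerson's Lemma then gives IC in the reduced sense, hence IC in the sense of Definition~\ref{def:e2e-ic}.

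For the ``only if'' direction, I would apply the standard two-bid swap: summing the IC inequalities at $(b_i, b_{-i})$ and $(b'_i, b_{-i})$ yields $(A_i(b_i) - A_i(b'_i))(b_i - b'_i) \geq 0$, so $A_i$ is monotone. Together with the envelope formula obtained by integrating the marginal IC condition (and normalizing $p_i(0, b_{-i}) = 0$), this pins down $p_i$ to be exactly~\eqref{eq:payment}. Monotonicity of $x$ itself in $\succeq^i$ then follows by identifying $\succeq^i$ with the order induced on the reachable prominences by bidder $i$'s expected CTR, which is the natural reading given the faithfulness property underlying Definition~\ref{def:monotone-llm}.

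The main obstacle is really just bookkeeping rather than mathematical depth: Assumption~\ref{ass:unbiased-pctr} does almost all the heavy lifting by collapsing the LLM randomness into the deterministic pCTR map, after which the proposition is a direct application of Myerson. The one place that requires a bit of care is the ``only if'' direction's translation of monotonicity of $A_i$ back into monotonicity of $x$ in the abstract order $\succeq^i$ on $\Prom$, since $\succeq^i$ is not a priori a total order; the clean way to handle this is to interpret $\succeq^i$ as the pCTR-induced preorder on prominences (equivalently, work modulo pCTR-equivalence), under which the two conditions become formally identical.
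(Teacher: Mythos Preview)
Your proposal is correct and matches the paper's approach: the paper itself omits the proof, stating only that ``the proposition follows from the definitions and standard auction theory~\citep{myerson1981optimal},'' which is exactly the reduction-to-Myerson argument you sketch. Your identification of the one genuine subtlety---that the ``only if'' direction yields monotonicity of the composite $A_i$ rather than of $x$ in the abstract order $\succeq^i$ directly---is apt, and the paper in fact acknowledges a closely related gap in its remark immediately following the proposition (noting the possibility of a non-monotonic allocation paired with a non-monotonic LLM yielding an IC system); your resolution via the pCTR-induced preorder is the natural one.
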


The proposition follows from the definitions and standard auction theory~\citep{myerson1981optimal}, and we omit the proof. Note that the proposition leaves open the (impractical) possibility of an intricately woven non-monotonic allocation and non-monotonic LLM pair which nevertheless result in a monotonic end-to-end system.

\subsection{Universality of the factorized model}%
We showed above that we obtain an incentive compatible factorized system if both components are monotonic (and the $\pctr$ module is unbiased). One natural question is whether there are other ways to design an incentive compatible auction/LLM system, beyond the factorized model in this paper. 
In this section, we prove a ``revelation principle" type result to show that we can indeed focus on the factorized model without loss of generality.

Consider any incentive compatible \emph{meta LLM-based mechanism} $\M_{\llm} = (x_{\llm}, p_{\llm})$ with an allocation rule $x_{\llm}: \B \times \Z \rightarrow \S$, which maps the bid profile and their original ads contexts to ads summaries directly. In addition, we assume $\M_{\llm}$ is also scale-free:
\begin{assumption}[Scale-freeness]\label{ass:scale-free-llm}
A meta LLM-based mechanism $\M_{\llm}$ is scale-free if $\forall b\in \B, z\in \Z, x_{\llm}(b, z) = x_{\llm}(cb, z)$ for any positive constant $c$.
\end{assumption}
The above scale-freeness property requires $\M_{\llm}$ generates the same ads summaries if all bids are scaled by a constant, which is a natural assumption in practice.%

\begin{theorem}
For any incentive compatible and scale-free meta LLM-based mechanism $\M_{\llm} = (x_{\llm}, p_{\llm})$, there exists an incentive compatible factorized model (having a prominence-based auction with a monotonic allocation function, and a monotonic LLM $\gen$), which achieves the same expected outcome as $\M_{\llm}$.
\end{theorem}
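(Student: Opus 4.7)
The plan is to build the factorized model by using bid normalization as the allocation rule and letting scale-freeness of $\M_{\llm}$ do the heavy lifting. Concretely, I define $x(b) := (b_1/\sum_j b_j, \ldots, b_n/\sum_j b_j) \in \Prom$ (with any convention at $b=0$), define the LLM by $\gen(\prom, z) := x_{\llm}(\prom, z)$ where a prominence vector is treated as a (normalized) bid vector, set $\pctr_i(\prom, z) := \expect{s\sim \gen(\prom, z)}{\ctrllm_i(s)}$ so that Assumption~\ref{ass:unbiased-pctr} holds by construction, and take $p$ to be the Myerson payment \eqref{eq:payment}. Scale-freeness (Assumption~\ref{ass:scale-free-llm}) gives $\gen(x(b), z) = x_{\llm}(b/\sum_j b_j, z) = x_{\llm}(b, z)$, so the summary distribution produced by the factorized system at every bid profile matches that of $\M_{\llm}$; since IC of $\M_{\llm}$ forces Myerson's formula on the payments given the allocation, $p = p_{\llm}$ as well.

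Monotonicity of $x$ is immediate: if $b_i$ grows with $b_{-i}$ fixed, then $b_i/\sum_j b_j$ grows and the ratios $b_j/b_k$ for $j, k \neq i$ are preserved, so $x(b_i, b_{-i}) \succeq^i x(b'_i, b_{-i})$ under the natural simplex order ``bidder $i$'s share is at least as large, and the other bidders' relative shares are unchanged.'' The main obstacle is showing $\gen$ is monotone in the same order, because for two prominence vectors $\prom \succeq^i \prom'$ the associated bid profiles obtained from the identity $\prom = b/\sum_j b_j$ need not share a common $b_{-i}$, so IC of $\M_{\llm}$ does not directly apply.

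I will resolve this using scale-freeness to move both bid profiles onto a common $b_{-i}$. Given $\prom \succeq^i \prom'$, set $b := (\prom_i, \prom_{-i})$, which satisfies $x(b) = \prom$. Since $\prom_{-i}$ and $\prom'_{-i}$ are proportional, taking $c := (1-\prom_i)/(1-\prom'_i)$ and $b' := (c\prom'_i, \prom_{-i})$ yields $x(b') = \prom'$, and a one-line calculation shows $b_i \ge b'_i$ iff $\prom_i \ge \prom'_i$. Applying scale-freeness and the Myerson monotonicity implied by IC of $\M_{\llm}$ then gives
\begin{align*}
\expect{s\sim \gen(\prom, z)}{\ctrllm_i(s)} = \expect{s\sim x_{\llm}(b, z)}{\ctrllm_i(s)} \geq \expect{s\sim x_{\llm}(b', z)}{\ctrllm_i(s)} = \expect{s\sim \gen(\prom', z)}{\ctrllm_i(s)},
\end{align*}
which is exactly Definition~\ref{def:monotone-llm}. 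With both monotonicities in hand, Proposition~\ref{prop:ic-prom-auction} delivers IC of the factorized model, and combined with the outcome-equivalence established in the first paragraph this completes the proof.
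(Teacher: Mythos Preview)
Your proof is correct and follows essentially the same approach as the paper: take the proportional allocation as $x$, set $\gen = x_{\llm}\circ x^{-1}$ (which for proportional allocation is just $x_{\llm}$ applied to the prominence vector, using scale-freeness for well-definedness), and derive outcome equivalence and monotonicity from scale-freeness together with the IC of $\M_{\llm}$. If anything, you supply more detail than the paper does on the monotonicity of $\gen$ --- the paper dispatches it in one line by noting that the composition of the monotone $x_{\llm}$ with the monotone inverse $x^{-1}$ is monotone, whereas you explicitly construct the common $b_{-i}$ via the scaling $c=(1-\prom_i)/(1-\prom'_i)$ and verify $b_i\ge b'_i\iff \prom_i\ge\prom'_i$.
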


\begin{proof}
Consider any incentive compatible meta mechanism $\M_{\llm} = (x_{\llm}, p_{\llm})$ with an allocation rule $x_{\llm}: \B \times \Z \rightarrow \S$, which maps the bid profile (and their original contents) to ads summaries. Now fix any prominence-based auction $(x, p)$ with allocation $x: \B \times \pctr \rightarrow \Prom$ that is \emph{strictly} monotonic (removing the tie in Definition~\ref{def:monotone-alloc}), and has an inverse $y = x^{-1}$ (up to scaling factors). Note $y$ is a function from a prominence to a bid profile. For example, one can take $x$ to be the proportional allocation rule.
Now we can simply construct an LLM $\gen = x_{\llm} \odot y$, where $\odot$ represents the composition operator of two functions. Given the scale-freeness of $\M_{\llm}$, $\gen$ is well-defined. By construction, the factorized model with the auction $(x, p)$ and the LLM $\gen$ is identical to the given meta mechanism $\M_{\llm}$. The payments follow from Myerson's lemma and are therefore also identical. Note that since the allocation function $x_{\llm}$ of $\M_{\llm}$ is monotonic (since $\M_{\llm}$ is given to be incentive comptibility), and since we chose $x$ to be monotonic, $\gen = x_{\llm} \odot y$ is also monotonic.

\end{proof}

\subsection{Welfare Maximization}

In this section, we consider the auction design problem to maximize the total (expected) welfare of ad summaries generated by LLM, defined as follows,
\begin{eqnarray}
\expect{b \in \F, z}{\sum_{i\in [n]} \expect{s\sim \gen(x(b), z)}{\ctrllm_i(s)}\cdot b_i},
\end{eqnarray}
where $\expect{s\sim \gen(x(b), z)}{\ctrllm_i(s)}\cdot b_i$ captures the expected value achieved by LLM for a fixed bid profile $b$ and the corresponding allocation by the auction. Note, since the auction is incentive compatible, the bid is equal to the private value, therefore, $b_i$ also represents the value of the bidder $i$.
Please note, the total expected welfare doesn't depend on the payment.

\begin{proposition}\label{prop:unbiased-pctr}
Under Assumption~\ref{ass:unbiased-pctr} ($\pctr$ unbiasedness), for any incentive compatible prominence-based auction $\M=(x, p)$ and monotone LLM $\gen$, we have,
\begin{equation}\label{eq:welfare}
\expect{b \in \F, z}{\sum_{i\in [n]} \pctr_i(x(b), z) \cdot b_i} = \expect{b \in \F, z}{\sum_{i\in [n]} \expect{s\sim \gen(x(b), z)}{\ctrllm_i(s)}\cdot b_i}.
\end{equation}
Thus, for a fixed monotone LLM $\gen$, the welfare-maximizing auction mechanism $\M=(x, p)$ will also maximize the expected welfare of final outcome generated LLM.
\end{proposition}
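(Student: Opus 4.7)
The plan is to reduce the proposition to a pointwise application of Assumption~\ref{ass:unbiased-pctr} followed by linearity of expectation. First, I would fix an arbitrary bid profile $b \in \B$ and feature profile $z \in \Z$ and evaluate the unbiasedness identity at the prominence $x(b)$: for each $i \in [n]$,
$$
\pctr_i(x(b), z) \;=\; \expect{s \sim \gen(x(b), z)}{\ctrllm_i(s)}.
$$
Multiplying this identity by the (constant, in $s$) scalar $b_i$, summing over $i\in [n]$, and then taking the outer expectation over $b \sim \F$ and $z$ delivers the equality~\eqref{eq:welfare}. The only interchange needed is pulling the finite sum and the scalar $b_i$ inside the inner expectation over $s$, which is immediate since neither depends on the LLM randomness.

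For the second claim, I would argue that the left-hand side of~\eqref{eq:welfare} is exactly the welfare surrogate that the auction module can compute at decision time: it depends only on the allocation rule $x$ and the $\pctr$ map, which is itself fixed once the LLM $\gen$ is fixed. Incentive compatibility ensures truthful bidding is optimal, so we may identify each reported bid $b_i$ with the true value $v_i$, making the right-hand side the genuine expected welfare of the final LLM-generated summary shown to the user. Since the two sides agree for every choice of allocation $x$, any auction $(x,p)$ that maximizes the surrogate on the left also maximizes the true welfare on the right.

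I do not anticipate a real technical obstacle here; the proposition is essentially a bookkeeping statement and the heavy lifting has already been done by Assumption~\ref{ass:unbiased-pctr}. The conceptual point to emphasize is that unbiasedness is exactly what decouples auction design from the stochasticity of LLM generation: without it, the auction would have to reason jointly about the LLM's sampling distribution and could not treat $\pctr$ as a stand-alone welfare proxy when choosing its allocation.
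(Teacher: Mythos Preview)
Your proposal is correct and matches the paper's treatment: the paper does not write out an explicit proof, effectively treating the proposition as immediate from Assumption~\ref{ass:unbiased-pctr} and linearity of expectation, which is exactly the argument you give. Your additional remark about incentive compatibility letting us identify $b_i$ with $v_i$ also mirrors the paper's surrounding discussion of welfare.
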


To maximize the total welfare, in general, we need to jointly optimize the LLM $\gen$ and the auction allocation $x$. The LLM can be trained to provide better summaries to maximize clicks while maintaining monotonicity, and the auction can be optimized to provide better prominence allocation. However, the intrinsic combinatorial structure of this problem, e.g., the pCTR model depends on the allocated prominence, makes this optimization problem hard in its general setting.
We instead focus on a simpler setting as a case study in Sec.~\ref{sec:case-study} where we can provide theoretical guarantees on welfare maximization, and also provide empirical evidence for the same in Sec.~\ref{sec:expt}. We note that in this paper, we focus on welfare maximization;
designing revenue-optimal prominence-based auction is an  interesting future direction.%

\section{Case Study: Dynamic Word-length Summary (DWLS) Auctions}\label{sec:case-study}%

The predominant UI for today's online (text-based) advertising (e.g., sponsored search) is to display at most $k$ (e.g. $4$) text ads sequentially in a position-based list~\cite{Varian06} with the ordering determined by an auction using candidates bids and pCTRs. Moreover, each ad can come with a fixed set of optional formats (e.g. merchant rating or phone number) to make the ad more informative, and beyond the ordering, the auction can further optimize total welfare with the available formats subject to total space constraint~\cite{cavallo-richads,aggarwal2022simple}. In this section, we discuss a specific instantiation of our prominence-based auction $+$ LLM framework, and we denote it as the dynamic word-lengh summary (DWLS). 
It is a straightforward and practical extension of today's position-based advertising, 
designed as a minimum example that retains the core elements of the generic framework.

We consider showing up to $k$ text ads subject to a limit on the total number of words (denoted by $L$). In particular, the relative prominence $\prom_i$ for an ad $i$ would be the (intended) fraction of the $L$ words allocated to the ad, i.e. at most $\prom_i\cdot L$ words (up to rounding) for ad $i$. We assume each ad $i$ as an input (i.e. the original full-size text) has $n_i$ words, and if our auction allocates $\prom_i\cdot L < n_i$ words to it, we would use an LLM to compress the original ad into a summary within $\prom_i\cdot L$ words. Note our method, like the position-based UI, compresses ads individually and puts their summaries one-by-one in order.
This is a very viable setup as existing off-the-shelf LLMs can readily perform single ad text summarization with prompting.

Under this setup, the auction's task is to decide both the ordering of the ads to be shown and the fraction of words allocated to each ad. From a technical point of view, if we consider the length of ad summary as an ad format, our setup generalized the existing position-based auction with formats in the sense that we now have access to a set of (continuous) formats dynamically generated by the LLM instead of a fixed given set. The significance of this difference on welfare-maximization is two-fold. Firstly, a much richer set of formats greatly expands the feasible space we can optimize over, and thus the optimal welfare would naturally be greater than the optimal welfare subject to the more restricted format space. Secondly, a more continuous space can also make the mechanism design task easier compared to the combinatorial case, e.g. analogous to solving linear program versus integer program. For example, we show in Section~\ref{sec:gap} under a fairly realistic CTR model, the (exact) welfare maximizing auction becomes very simple, whereas in the combinatorial format case, even approximate optimal auction design can be quite challenging ~\cite{aggarwal2022simple,cavallo-richads}. 

In the rest of this section, we discuss the two modules in more detail. In Section~\ref{sec:prompting}, we talk about the prompting strategy we could use given the allocated prominence (i.e. word limit) to generate good summaries such that the monotonicity property (\ref{def:monotone-llm}) holds. In Section~\ref{sec:gap}, we look at the auction design problem and show the welfare-maximizing auction under a certain CTR model.

\subsection{Prompting Strategy\label{sec:prompting}}

LLMs are very effective in generating summaries that are preferred by humans \citep{liu2022revisiting, zhang2024benchmarking}.
In DWLS we wish to summarize each ad separately. We achieved our objective by few-shot Chain-of-Thought \citep{zhou2022least} prompting. We provide the LLM with very few (3) hand crafted examples of ads, number of words to summarize the ad by, and summary text. We also include key phrases mentioned in the advert as intermediate steps, helping the LLM extract phrases corresponding to the advert that are useful during summarization.

\subsection{Auction Design}
\label{sec:gap}
We consider welfare-maximizing auctions for DWLS, and we work in a simplified case where we are given $n$ eligible ad candidates (with their bids and pCTRs) and also the maximum number of shown ads $k$, so the auction needs to pick exactly $\min(n,k)$ ads to show, and decide the ordering as well as the relative prominence (or equivalently the number of words) of the shown ads. As the total welfare intricately depends on the CTRs of the ads in the way they are shown to the user (i.e. compressed and shown along each other), it's in general implausible to design welfare-maximizing auctions without understanding the CTR. Thus, we start with a fairly realistic CTR model, which is also a straightforward extension of the widely-adopted CTR model in the position-based auction. 

There are several CTR concepts in DWLS: $(1)$ The base CTR when the original ad (i.e. without compression) is shown to the user without other ads. This is given as input to the auction, and in this section we denote it by $\pctr_i$ for an ad $i$. $(2)$ The CTR when the original ad is shown alongside with other ads, and we denote it by $\ppctr_i$. This CTR would naturally decrease for an ad when it's shown after other ads, so it depends on the auction's outcome of ad ordering. $(3)$ The CTR of the ad's summary (i.e. after compression) shown alongside other ads. This further depends on the summarization quality, and we denote it by $\fpctr_i$. Note $\fpctr_i$ would be the CTR in the calculation of welfare for a shown ad $i$, i.e. the $\pctr_i(x(b),z)$ term in Equation~\eqref{eq:welfare}. These variants are analogous to the position-based case where there are known as position-$1$ pCTR, position-normalized pCTR and position-normalized-formatted pCTR respectively. 

Similar to CTR models in practice for the position-based case, we consider a factorized model where $\ppctr_i=\pctr_i\cdot\posnorm_i$ with $\posnorm_i$ being a discount factor for the position (i.e. order) of ad $i$ in the list of shown ads. The position discount factor only depends on the position and is provided as input to the auction as $k$ fixed numbers (e.g. $1.0, 0.9, 0.81, \ldots$) for the positions respectively. The factors are in $[0,1]$ and decreases for later positions. Moreover, $\fpctr_i$ would be $\ppctr_i$ further multiplied by a compression discount factor capturing the summarization quality. Since we focus on auction design, our CTR model in this section doesn't explicitly consider the LLM quality, so we assume the LLM can faithfully follow the instructed word limit and do equally well at summarizing any given ad conditioned on the number of allocated words, and thus the compression discount factor is (only) a function on the number of words. Technically this means $\fpctr_i=\ppctr_i\cdot f(\prom_i)$ for some fixed function $f(\cdot)$ since $\prom_i$ is (up to scaling by $L$) equivalent to number of words allocated to ad $i$. In the rest of the section, we focus on the family of $f(\prom_i)=\prom_i^\beta$ for $\beta\in(0,1]$ and characterize the welfare-maximizing auction in this setting. Note as $\prom_i \in [0,1]$, such family of polynomial functions naturally serve as a compression discount factor in $[0,1]$. For simplicity, it's helpful to think of our CTR model in a setting where the full length of all ads are roughly the same, and are all much longer than the number of available words to show ad summaries. 

Formally speaking, given a set of $n$ ad candidates, we denote $\vbid$ as the vector of bids with $\bid_i$ being the bid of ad $i$, $\vpctr$ as the vector of input (base) CTRs, and $r_1\geq\ldots\geq r_k$ as the position discount factor of the $k$ ad positions respectively. In DWLS, we write the auction's allocation explicitly as two vectors $\vposnorm$ and $\prom$, where $\vposnorm$ captures which (up to $k$) ads to show and their position (i.e. ordering) and $\prom$ is the vector of relative prominence, which lies in the $n$-dimensional simplex. The $i$-th entry of $\vposnorm$ would be $r_t$ if ad $i$ is picked by the auction to show at position $t\in[1,\ldots,k]$ or $0$ if not picked. The total welfare is
\[
(\vbid\odot\vpctr\odot\vposnorm)\cdot f(\prom)
\]
where $\odot$ denotes entry-wise multiplication, $\cdot$ is dot product, and $f(\prom)$ is the vector of applying $f(\cdot)$ to each entry of $\prom$.

It becomes clear from the welfare formulation that the factorized CTR model allows the auction to pick an optimal $\prom$ conditioned on $\vposnorm$ (i.e. the shown ads and their positions), and we first characterize the optimal $\prom$ allocation rule.  
\begin{definition}[Generalized Proportional Allocation]
\label{def:gpa}
Given $n$ ads with $\vbid, \vpctr$, $\vposnorm$, and denote $\vecpm$ as the vector with $\ecpm_i = \bid_i\cdot\pctr_i\cdot\posnorm_i$, the generalized proportional allocation with parameter $\alpha$ allocates as follows
\begin{eqnarray}
\prom_i = \frac{(\ecpm_i)^\alpha}{\sum_{j=1}^n (\ecpm_j)^\alpha},\quad \forall i=1,\ldots, n  
\label{eq:gpa}
\end{eqnarray}
Note in DWLS there will be exactly $\min(n,k)$ non-zero entries in $\vecpm$.
\end{definition}
We will show that the welfare-maximizing auction corresponding to any CTR model we consider with $f(\prom_i)=\prom_i^\beta$ for some $\beta \in (0,1]$ would use a generalized allocation rule with appropriately chosen $\alpha$. As a special case, consider when $\beta=1$ so $f(\prom_i)=\prom_i$ and we want to maximize $\vecpm\cdot\prom$. Since $\prom$ lies in the $n$-dimensional simplex, the optimal $\prom$ is to put weight only on the ad(s) in the set of $\argmax_{i\in [n]}\ecpm_i$, and this correspond to the generalized proportional allocation with $\alpha=\infty$. This is a fairly trivial case, and also a less realistic example in our family of CTR models. In particular, the $f(\cdot)$ function in a more realistic CTR model would capture a \emph{diminishing return} phenomenon w.r.t. prominence (or equivalently the number of words), i.e., the marginal improvement of CTR vanishes as we add more and more words to the summary. For example, the CTR increase from a $10$-word ad summary to a $20$-word summary is typically much higher than the CTR increase from $40$ to $50$ words. Technically, a strictly concave $f(\cdot)$ would capture this effect, which corresponds to $\beta\in (0,1)$ with stronger diminishing return effect for smaller $\beta$. 
\begin{theorem}\label{thm:optimality-proportional}
When $f(\prom_i)=\prom_i^\beta$ for $\beta\in (0,1)$ in our factorized CTR model, the optimal $\prom$ conditioned on $\vposnorm$ follows the generalized proportional allocation with $\alpha=1/(1-\beta)$, and the optimal welfare conditioned on $\vposnorm$ is $\|\vecpm\|_\alpha$, i.e. the $\alpha$-norm of the vector $\vecpm$ in Definition~\ref{def:gpa}.
\end{theorem}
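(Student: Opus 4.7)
Proof plan: The maximization problem, conditioned on $\vposnorm$, reduces to
\[
\max_{\prom \in \Delta^n} \sum_{i=1}^n \ecpm_i \cdot \prom_i^{\beta},
\]
where $\Delta^n$ is the simplex $\{\prom : \prom_i \geq 0, \sum_i \prom_i \leq 1\}$. Since $\beta\in(0,1)$, each $\prom_i \mapsto \prom_i^\beta$ is strictly concave on $[0,1]$ and the objective is a nonnegative linear combination of concave functions, hence concave; the feasible set is convex and compact, so a global maximizer exists and is characterized by the KKT conditions. My plan is to solve these KKT conditions explicitly and then plug back to compute the optimal value.

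First, I would argue that at the optimum $\sum_i \prom_i = 1$: because $\prom_i^\beta$ is strictly increasing in $\prom_i$ on $[0,1]$ and at least one $\ecpm_i>0$ (assuming the auction shows at least one ad), any solution with $\sum_i \prom_i<1$ can be strictly improved by scaling up. I also restrict attention to the coordinates $i$ with $\ecpm_i>0$ (equivalently the $\min(n,k)$ ads picked by $\vposnorm$); for the others, the objective is independent of $\prom_i$, so setting $\prom_i=0$ loses nothing and saves mass for the productive coordinates. Writing the Lagrangian
\[
\mathcal{L}(\prom,\lambda) \;=\; \sum_i \ecpm_i \prom_i^\beta \;-\; \lambda\Bigl(\sum_i \prom_i - 1\Bigr),
\]
the stationarity condition on each active coordinate gives $\beta\,\ecpm_i\,\prom_i^{\beta-1} = \lambda$, i.e.\ $\prom_i = (\beta\,\ecpm_i/\lambda)^{1/(1-\beta)}$. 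Thus $\prom_i \propto \ecpm_i^{\,1/(1-\beta)} = \ecpm_i^{\,\alpha}$ with $\alpha = 1/(1-\beta)$. Normalizing by the simplex constraint yields exactly the generalized proportional allocation of Definition~\ref{def:gpa}:
\[
\prom_i^\star \;=\; \frac{\ecpm_i^{\alpha}}{\sum_{j} \ecpm_j^{\alpha}}.
\]

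Finally, I would substitute $\prom^\star$ back into the welfare. Using $1 + \alpha\beta = 1 + \beta/(1-\beta) = 1/(1-\beta) = \alpha$, we get $\ecpm_i \cdot (\prom_i^\star)^\beta = \ecpm_i^{\,\alpha} / (\sum_j \ecpm_j^{\alpha})^{\beta}$, so
\[
\sum_i \ecpm_i (\prom_i^\star)^\beta \;=\; \Bigl(\sum_i \ecpm_i^{\alpha}\Bigr)^{1-\beta} \;=\; \Bigl(\sum_i \ecpm_i^{\alpha}\Bigr)^{1/\alpha} \;=\; \|\vecpm\|_\alpha,
\]
which is the claimed optimal welfare. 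I do not expect a serious obstacle: strict concavity combined with Slater's condition makes the KKT conditions both necessary and sufficient, so the only care needed is the bookkeeping around zero-$\ecpm$ coordinates and the algebraic identity $1+\alpha\beta=\alpha$, both of which are elementary.
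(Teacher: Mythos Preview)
Your argument is correct, but it takes a different route from the paper. The paper makes the substitution $x_i = \prom_i^\beta$, turning the simplex constraint $\sum_i \prom_i = 1$ into the unit-sphere constraint $\|x\|_p = 1$ with $p = 1/\beta$, so that the welfare becomes the inner product $\vec{x}\cdot\vecpm$; H\"older's inequality then immediately gives the upper bound $\|\vecpm\|_q$ with $q = 1/(1-\beta) = \alpha$, and the equality case of H\"older identifies the optimizer as $x_i^p \propto \ecpm_i^q$, i.e.\ the generalized proportional allocation. Your approach instead attacks the concave program directly with KKT conditions, solving the stationarity equation $\beta\,\ecpm_i\,\prom_i^{\beta-1} = \lambda$ and then substituting back. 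Both are valid; the H\"older route is shorter and delivers the optimal value $\|\vecpm\|_\alpha$ as the dual norm without any algebraic back-substitution, while your Lagrangian route is more elementary and self-contained (no inequality needed), at the cost of the extra algebra verifying $1+\alpha\beta=\alpha$ and a small technicality you glossed over, namely that for every coordinate with $\ecpm_i>0$ the optimizer must have $\prom_i^\star>0$ (so stationarity applies) because the marginal $\beta\,\ecpm_i\,\prom_i^{\beta-1}\to\infty$ as $\prom_i\downarrow 0$.
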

\begin{proof}
Consider the vector $\vecpm$ with $\ecpm_i = \bid_i\cdot\pctr_i\cdot\posnorm_i$, and denote $\vec{x}$ as the vector $f(\prom)$ (i.e. $x_i=\prom_i^\beta$). The constraint that $\prom$ lies in the simplex is equivalent to the $p$-norm of $x$ being $1$ for $p=1/\beta$, and the optimization of $\prom$ is equivalent to maximize $\vec{x}\cdot\vecpm$ subject to $\|x\|_p=1$. By H\"older's inequality, we know $\vec{x}\cdot\vecpm\leq \|x\|_p\|\vecpm\|_q = \|\vecpm\|_q$ where $q$ is the dual-norm, i.e. $q=1/(1-1/p)=1/(1-\beta)$. Since $\|\vecpm\|_q$ is fixed, the optimal welfare is achieved with the $\vec{x}$ making the inequality an equality. Since $p,q\in (1,\infty)$, we can explicitly write the $\vec{x}$ that gives the equality case in H\"older's inequality, which is $x_i^p=(\ecpm_i)^q/\sum_{j\in[n]}(\ecpm_j)^q$. Since $\prom_i=x_i^p$, we get the generalized proportional allocation with $\alpha=q=1/(1-\beta)$.
\end{proof}
With the above theorem, choosing the optimal $\vposnorm$ in the welfare-maximizing auction becomes straightforward. As the optimal welfare conditioned on $\vposnorm$ is exactly $\|\vecpm\|_\alpha$, we should pick the ads and their ordering to maximize $\|\vecpm\|_\alpha$. For any $\alpha\in (1,\infty)$ and $\vecpm=\bid\odot\vpctr\odot\vposnorm$, the optimal $\vposnorm$ is clearly incurred by picking the top $\min(n,k)$ ads with the highest product of bid and pCTR and show them in the same order according to that product. This completes the optimal allocation of the welfare-maximizing auction. It is straightforward to check this allocation rule is monotone since both $\posnorm_i$ and $\prom_i$ are monotonic in $\bid_i$. Since we consider IC mechanisms, the pricing rule follows from Myerson's Lemma, which is also straightforward to calculate in our factorized CTR model.

We note that it may not be immediately clear from the way we present DWLS that it's a strict instantiation of our general prominence-based auction $+$ LLM model. To cast it in our model, note the auction only need to output the relative prominence $\prom$ as in the general model, and the set of shown ads as well as their ordering can be inferred from $\prom$. That is, the LLM in our general model would be a meta-LLM in DWLS that takes in the $\prom$ vector, finds the (up to $k$) ads with non-zero weights in $\prom$, then calls a real LLM to compress the ads one-by-one separately with the respective word limits according to $\prom$, and arrange the individual ad summaries in the same order as their weights in $\prom$.

\section{Experimental Results}\label{sec:expt}
In this section, we present an empirical study on a synthetic data set for the DWLS setting. Through these experiments we first verify the validity of our proposed framework to generate ad summaries with the auction and LLM framework. %
We then demonstrate the efficiency of our mechanism, i.e., our generalized proportional allocation rule and prompting strategy to the LLM can provide more efficient allocation compared with some simple benchmarks. The LLMs we used to generate synthetic ads data and perform summarization task are both Gemini 1.0 Pro.

\begin{table*}[t]
\centering
\begin{tabular}{@{}p{2.8in}cccc@{}}
\toprule
Generated Ad & & Generated Summary A & & Generated Summary B \\ \midrule 
 {\footnotesize $\langle url1\rangle$ Feeling intimidated by golf? Our beginner-friendly clinics offer a relaxed environment to learn fundamentals and etiquette. Get comfortable with the swing, chipping, and putting in a supportive group setting. No experience needed, just bring your enthusiasm!} & & \multirow{3}{*}{\parbox{3.6cm}{\footnotesize {\textcolor{blue}{{$\langle url1\rangle$}}} Golf clinics for beginners! Learn basics, etiquette, \& practice swing, chipping, putting in a relaxed group setting. No experience required, just bring your enthusiasm!\\ {\textcolor{blue}{{$\langle url2\rangle$}}} Master golf at home! PGA Pro video lessons offer swing analysis, drills \& coaching. Learn on your own schedule.\\ {\textcolor{blue}{{$\langle url3\rangle$}}} Fun, safe learning! Build skills, sportsmanship, \& friendships. Expert coaching ignites love for the game.}} && \multirow{3}{*}{\parbox{3.6cm}{ \footnotesize {\textcolor{blue}{{$\langle url1\rangle$}}} Conquer your golf fears! Our beginner clinics in a relaxed setting teach you the basics and proper etiquette. Learn swing, chipping, and putting with supportive peers. No experience required, just bring your enthusiasm!\\ {\textcolor{blue}{{$\langle url2\rangle$}}} Level up your golf game from home! Get personalized video lessons with PGA pros. Master swing, drills, and course strategy at your own pace.}}\\
\cmidrule{1-1}
 {\footnotesize $\langle url2\rangle$ Busy schedule? Learn from the comfort of your couch! Connect with experienced PGA pros for tailored video lessons. Get swing analysis, drills, and course strategy guidance at your own pace. Achieve your golfing goals on your own terms.} & &   \\
 \cmidrule{1-1}
 {\footnotesize $\langle url3\rangle$ Spark a lifelong love for the game! Our programs provide a fun and safe space to learn golf basics, develop sportsmanship, and make new friends. Qualified instructors guide young golfers through proper techniques and etiquette, fostering a passion for the green.} & &  \\
\bottomrule
\end{tabular}
\caption{{Synthetic ads example generated by LLMs for the query ``Learning golf'' and the output summarized by LLM in accordance with the prominence provided by GPA. We used $\alpha=2$ and total word length$=60$. For summary A, the input ecpms to the auction (bid times base pctr) were [0.645, 0.641, 0.617], the relative prominence output by the auction was [0.417, 0.333, 0.250] implying a target word length distribution of [25, 20, 15]. For summary B, ecpm=[0.764, 0.710, 0.113], the relative prominence=[0.583, 0.409, 0.007] and the target word lengths=[35, 25, 0].}}
\label{tab:synthetic_ad_example}
\end{table*}

\subsection{Data Generation}\label{sec:data-generation}
Using a large LLM (Gemini 1.0 Pro), we produce a set of \emph{synthetic} and \emph{anonymous} advertisements. For each generated query, the LLM generates between two and four ads. For example, we show an example contains 3 synthetic ads generated by a LLM for the query ``\emph{Learning golf}" in Table~\ref{tab:synthetic_ad_example}.
Following the standard assumption in ad auctions (e.g., \cite{ostrovsky2011reserve}), we assume the bid $b_i$ of each ad $i$ follow a log-normal distribution $\textsc{LogNormal}(0.5, 1)$. In addition, the click-through rate $CTR_i$ of each ad $i$ is independently and identically (i.i.d) sampled from a uniform distribution $\textsc{Unif}[0, 1]$. %
Note the $CTR_i$ is the \emph{base} click-through rate of ad $i$ if it is shown to the user solely.

\subsection{Evaluation Model}
To evaluate the real clicks (and thus welfare), we need user feedback from a live production system. This requires deploying our proposed framework in real online advertising production to get human evaluation, which is beyond the scope of this work. For the purpose of evaluation in this paper, we propose a synthetic evaluation model. In particular, we first define the CTR function used to evaluate the performance of different mechanisms,
\begin{definition}[Click-through Rate function for evaluating welfare]
For any ad summaries $s\in \S$ generated by LLM and the associated original contexts $z$, the CTR of each ad $i\in [n]$ is taken to be
\begin{eqnarray}
CTR_i(s, z) = CTR_i \cdot f_i(s, z) \cdot norm_i(s),
\end{eqnarray}
where $CTR_i$ is the base click-through rate of ad $i$, $f_i(s, z)$ models the CTR multiplier due to the summary quality of each ad $i$ as a function of the summaries $s$ and the original contexts $z$, and $norm_i(s)$ quantifies the UI normalizer for each ad $i$ in the summary. %
\end{definition}
In the DWLS setting, $norm_i(\cdot)$ is simply a position discount factor, i.e., $norm_i(\cdot)$ only depends on the rank of ad $i$'s summary in $s$ and this UI normalizer will be lower when the its rank is lower. We use functionals of the \textsc{ROUGE} metric~\citep{Lin04Rouge} to quantify summary quality $f_i(s,z)$. At a high level, the (variant of) ROUGE metric we use is a score between $0$ and $1$, and captures both the length of the summary $s$ and the relevance between $s$ and $z$. As the LLM mostly can summarize very well, the main factor for the ROUGE score in our case becomes the length we allocate to $s$, and thus the ROUGE and relative prominence roughly follow a linear relationship. Consequently, the evaluation model aligns qualitatively with our theoretical model in Section~\ref{sec:case-study} when we take the $f_i(s,z)$ to be some concave polynomial function of ROUGE.

\subsection{Benchmarks}
We compare the proposed generalized proportional auction mechanism against two natural baseline auction mechanisms, one that doesn't utilize the power of LLMs for summarization, and the other which doesn't involve optimizing the auction design. 
\begin{itemize}[leftmargin=*]%
\item \emph{Greedy Auction}: Ads are shown in a list respecting the $CTR_i \cdot b_i$ ranking. The original creatives of ads are shown (i.e. with no summarization) until we run out of space. If an ad doesn't fit in the remaining space then it cannot be shown. This baseline evaluates the performance of an auction mechanism that doesn't utilize the power of LLMs to summarize.
\item \emph{Position Auction with Fixed Length}: Ads are shown in a summary paragraph which gives equal number of words for each ad. 
This baseline evaluates the performance of LLMs to summarize, but without optimizing the auction design. 
\end{itemize}
We compare these baselines against generalized proportional auction with an appropriate $\alpha$ (defined below) combined with the LLM output ("GPA+LLM") by comparing the value of average welfare realized by each of these approaches.

\subsection{Set-up}
First, we generate 1000 random queries. Then, we use the data generation method from Section~\ref{sec:data-generation} to generate synthetic and anonymous ad creatives for each query.
Throughout the experimental section, we set the $norm_i(s) = 0.9^{(rank_i(s) - 1)}$, where $rank_i(s)$ is the rank of ad $i$'s summary in $s$. We vary function $f_i$ in our experimental results. Specifically, we set $f_i(s,z) = (\ROUGE(s_i, z_i))^{\beta}$ where we set $\beta$ to $1/2,\ 1/3\ \&\ 1/4$. Motivated by Theorem~\ref{thm:optimality-proportional} we choose $\alpha= \frac{1}{1-\beta}$ in our generalized proportional auction. We note in practice when $f_i(s,z)$ is not known, it is conceivable that some polynomial function with $\beta\in (0,1)$ qualitatively approximates it well. Thus generalized proportional auction with the corresponding (but unknown) $\alpha$ still has good performance, and it is very practical to tune for a good $\alpha$.

\subsection{Results}

\subsubsection{Qualitative} In table~\ref{tab:synthetic_ad_example}, we present a qualitative result through a sample (end to end) instantiation of our framework for two different sets of bids. In each setting, the bids and pctrs are converted by the auction to prominence values (number of words), and the LLM faithfully generates a summary based on the prominence. We note how the first ad gets a significantly larger fraction of the space in the second (skewed) setting of bids.

\begin{figure}%
\centering
\includegraphics[width=0.49\textwidth]{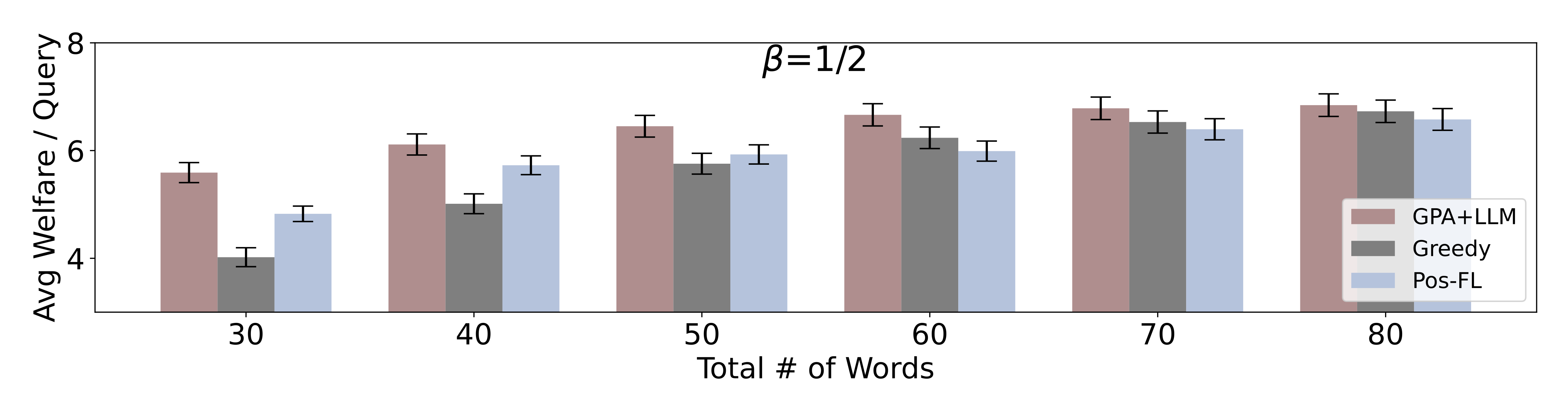}
\includegraphics[width=0.49\textwidth]{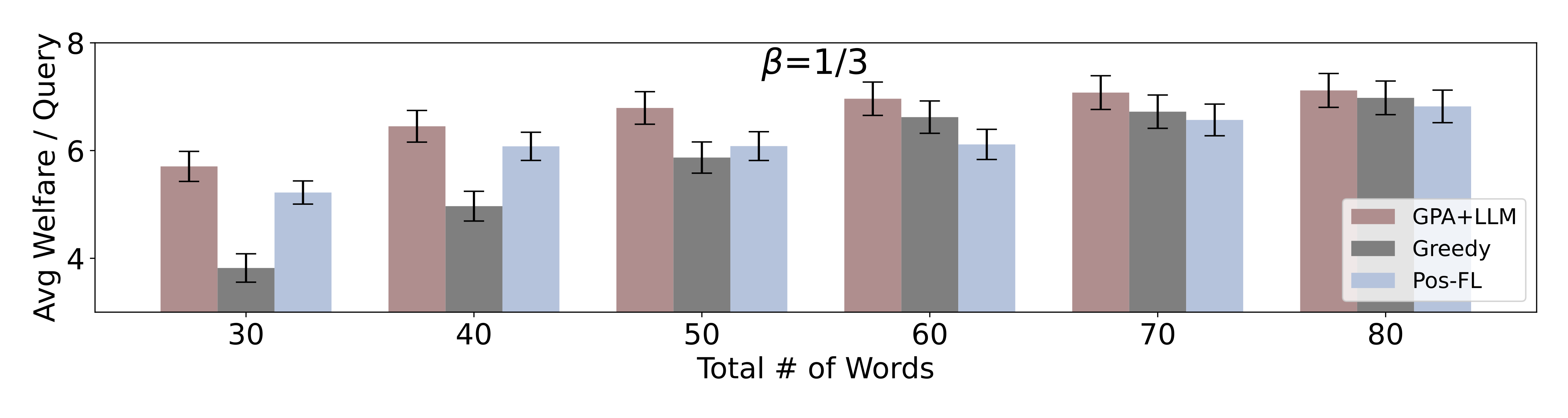}
\includegraphics[width=0.49\textwidth]{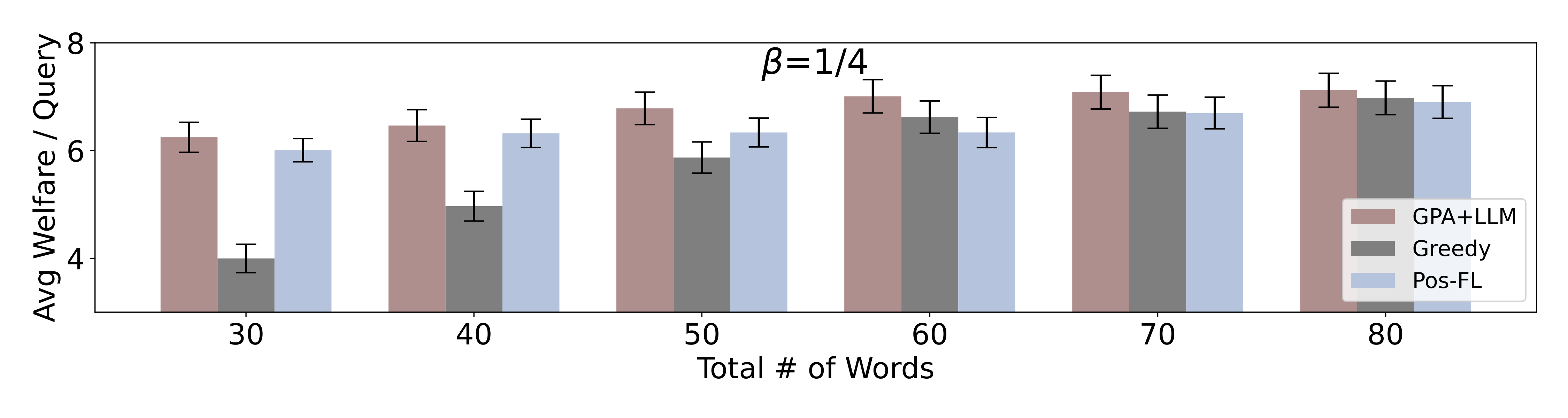}
\caption{Total welfare with different choices of total number of words for summarization.}
\label{fig:welfare_vs_total_words}
\end{figure}

\subsubsection{Efficiency}
In figure~\ref{fig:welfare_vs_total_words}, we compare the two baselines, namely the position auction with fixed length ("POS-FL") and the greedy auction against GPA+LLM with $\alpha=\tfrac{1}{1-\beta}$. Note that the behavior of the two baselines, Greedy and Position Auction, do not depend on the value of $\beta$. Furthermore, the reward for Greedy also does not depend on $\beta$ because if it shows an ad then it shows the entire ad text yielding a ROUGE score of 1. We observe from  figure~\ref{fig:welfare_vs_total_words} that GPA+LLM does better than both the baselines showcasing the value of utilizing the power of LLM along with optimizing the auction allocation. As the total number of words increases, the baselines catch up to the proposed GPA+LLM approach since the LLM is largely not required for summarizing the creatives, as all the original ad creatives can fit in many cases.
Comparing across the two baselines, when the total number of words is small, then greedy often leaves space unused since it can only show an ad creative in its entirety (since it does not use LLM to resize ads) and hence performs worse than position auction which does resize ads. When the total number of words is larger, then greedy starts doing better than position auction, because it is often better to split the space between fewer ads (e.g., two ads when the third ad's bid is low).

\section{Conclusions and Future Directions}\label{sec:conclusion}
This paper develops a factorization that enables auction based allocation for general LLM-based summarizations. 
The paper studies an instantiation that shows the near-optimality of the generalized proportional auction under a certain class of parameterized click through rate models for the LLM-generated summaries.
While the empirical section in the paper works with a {\em fixed} LLM and the dynamic word-length summarization interface, it is worth looking into the use of finetuning to work in conjunction with appropriate auction design to adapt to user behavior on other classes of flexible user interfaces.

\bibliographystyle{plainnat}
\bibliography{ref}

\appendix

\end{document}